\documentclass[a4paper,USenglish,cleveref]{lipics-v2021}

\hideLIPIcs
\relatedversiondetails[cite=DBLP:conf/wdag/LeinweberH23]{Publisher Edition}{https://doi.org/10.4230/LIPIcs.DISC.2023.43}

\EventEditors{Rotem Oshman}
\EventNoEds{1}
\EventLongTitle{37th International Symposium on Distributed Computing (DISC 2023)}
\EventShortTitle{DISC 2023}
\EventAcronym{DISC}
\EventYear{2023}
\EventDate{October 10-12, 2023}
\EventLocation{L'Aquila, Italy}
\EventLogo{}
\SeriesVolume{281}
\ArticleNo{40}

\pdfoutput=1 
\nolinenumbers 
\bibliographystyle{plainurl}

\category{Brief Announcement}

\usepackage[ruled]{algorithm}
\usepackage[noend]{algpseudocode}
\algdef{SL}[REPLICASTATE]{ReplicaState}{0}[1]{\textbf{state} #1}
\algblockdefx[FN]{Fn}{EndFn}%
[3]{\textbf{function} \emph{#1}(#2) : #3}{}
\algblockdefx[PROC]{Proc}{EndProc}%
[2]{\textbf{function} \emph{#1}(#2)}{}
\algblockdefx[UPON]{Upon}{EndUpon}%
[2]{\textbf{upon} \texttt{#1} (#2)}{}
\algtext*{EndUpon}
\algtext*{EndFn}
\algtext*{EndProc}
\newcommand{\mfn}[2]{\mathrm{#1}(#2)}
\newcommand{\quorum}{\lfloor \frac{n}{2} \rfloor + 1}

\title{Let It TEE: Asynchronous Byzantine Atomic Broadcast with $n \geq 2f+1$}

\titlerunning{Let It TEE}

\author{Marc Leinweber}{Institute of Information Security and Dependability (KASTEL), Karlsruhe Institute of Technology (KIT), Germany}{marc.leinweber@kit.edu}{https://orcid.org/0000-0002-9638-8526}{}
\author{Hannes Hartenstein}{Institute of Information Security and Dependability (KASTEL), Karlsruhe Institute of Technology (KIT), Germany}{hannes.hartenstein@kit.edu}{https://orcid.org/0000-0003-3441-3180}{}

\authorrunning{M. Leinweber and H. Hartenstein}

\Copyright{Marc Leinweber and Hannes Hartenstein}

\ccsdesc{Security and privacy~Distributed systems security}

\keywords{Byzantine Fault Tolerance, Trusted Execution Environments, Asynchrony}

\funding{This work was supported by funding from the topic Engineering Secure Systems of the Helmholtz Association (HGF).}

\begin{document}

\maketitle  

\begin{abstract}
Asynchronous Byzantine Atomic Broadcast (ABAB) promises simplicity in implementation as well as increased performance and robustness in comparison to partially synchronous approaches.
We adapt the recently proposed DAG-Rider approach to achieve ABAB with $n\geq 2f+1$ processes, of which $f$ are faulty, with only a constant increase in message size. We leverage a small Trusted Execution Environment (TEE) that provides a unique sequential identifier generator (USIG) to implement Reliable Broadcast with $n>f$ processes and show that the quorum-critical proofs still hold when adapting the quorum size to $\quorum$. This first USIG-based ABAB preserves the simplicity of DAG-Rider and serves as starting point for further research on TEE-based ABAB.
\end{abstract}


\section{Introduction}

Atomic Broadcast primitives play a crucial role for Byzantine-fault tolerant (BFT) State Machine Replication (SMR).
A prominent example for BFT SMR in the partially synchronous model is PBFT \cite{DBLP:journals/tocs/CastroL02}.
By use of small Trusted Execution Environments (TEE) that generate and sign unique sequential identifiers on each process, called USIGs, 
Veronese et al.~\cite{DBLP:journals/tc/VeroneseCBLV13} showed that PBFT's communication complexity can be reduced and the fault tolerance can be increased to $n \geq 2f+1$ while still tolerating Byzantine faults.
The authenticity/integrity of TEEs can be verified remotely and, thus, TEEs are assumed to only fail by crashing.
However, as shown by Miller et al.~\cite{DBLP:conf/ccs/MillerXCSS16}, Asynchronous Byzantine Atomic Broadcast (ABAB) outperforms approaches based on the partially synchronous model particularly under faults and tends to show a simpler design.
While it is known that any asynchronous crash fault-tolerant algorithm can be compiled to withstand Byzantine faults using TEEs \cite{DBLP:conf/podc/Ben-DavidCS22,DBLP:conf/podc/ClementJKR12}, 
the proposed compilers show either a polynomial or an exponential overhead in runtime. 
We are interested in a simple and straightforward design of a USIG-enhanced ABAB that does not add further message rounds and only adds a constant number of bits to each message (essentially a counter value and a signature).
To this end, we adapt DAG-Rider \cite{DBLP:conf/podc/KeidarKNS21} to provide ABAB with $n \geq 2f+1$ processes.
We give a quick recap on DAG-Rider and explain the adaption \emph{TEE-Rider}. 
Besides using TEE-based Reliable Broadcast and changing the required quorums from $2f+1$ to $\quorum$, we leave DAG-Rider unchanged.
We show that the quorum-based arguments of DAG-Rider still hold for TEE-Rider.

\section{TEE-Rider: Transforming DAG-Rider to $n\geq2f+1$}
\label{sec:tee-rider}

We make use of the following definition of Atomic Broadcast for a set of processes $P, n := |P|$. 
The processes communicate over authenticated point-to-point links with eventual delivery. 

\begin{definition}[Atomic Broadcast]
\label{def:abc}
    Each process $p_i \in P$ receives client transactions $t$ via events $\mfn{clientRequest}{t}$.
    Correct processes deliver tuples $(t, r, p_i)$, where $t$ is a client transaction, $r \in \mathbb{N}_0$ a round number, and $p_i \in P$ the process that initially received $t$, satisfying the following properties: \\
    \textbf{Agreement:} If a correct process $p_i \in P$ delivers $(t, r, p_j)$, then every other correct process $p_k \in P, k \not = i$ eventually delivers $(t, r, p_j)$ with probability 1.\\
    \textbf{Integrity:} For each round $r \in \mathbb{N}_0$ and process $p_j \in P$, a correct process $p_i \in P$ delivers $(t, r, p_j)$ at most once.\\
    \textbf{Validity:} If a correct process $p_i \in P$ receives an event $\mfn{clientRequest}{t}$, then every correct process $p_k \in P$ eventually delivers $(t, r, p_i)$ with probability 1.\\
    \textbf{Total Order:} Let $m_1$ and $m_2$ be any two valid tuples that are delivered by any two correct processes $p_i, p_j \in P$. If $p_i$ delivers $m_1$ before $m_2$, then $p_j$ delivers $m_1$ before $m_2$.
\end{definition}

\subsection{Changes in Assumptions, Building Blocks, and Setup} 
In addition to the assumptions of DAG-Rider, we assume that each process is equipped with a USIG \cite{DBLP:journals/tc/VeroneseCBLV13} that may only fail by crashing. It implements a signature service that binds a unique counter value to each signature it produces.
The USIG is used for Reliable Broadcast with a fault tolerance of $n>f$ as, e.g., implemented in \cite[Algorithm 1]{DBLP:conf/sac/CorreiaVL10}: it is a `single echo' algorithm with USIG-signed messages and attached counter value. Correct processes relay a message once and reject messages with invalid USIG signatures or with counter values already received which prevents equivocating messages for the same counter.
An instance of the Reliable Broadcast abstraction has two functions: $\mfn{broadcast}{r,m}$ to reliably broadcast exactly one arbitrary message $m$ for round $r$ to all processes in $P$, and $\mfn{delivered}{}$ which returns all messages that were received by the instance since the last call to $\mfn{delivered}{}$. 
We expect the Reliable Broadcast abstraction to fullfil the following properties:

\begin{definition}[Reliable Broadcast]
\label{def:rbc}
A sender $p_s \in P, n := |P|$ can call $\mfn{broadcast}{m}$. Correct processes deliver $(c, m)$ where $c \in \mathbb{N}_0$ and $m$ an arbitrary message satisfying the following properties: \\
    \textbf{RB-Agreement.} If a correct process $p_i \in P$ delivers $(c, m)$, then every other correct process $p_k \in P, k \not = i$ eventually delivers the same $(c,m)$.\\
    \textbf{RB-Integrity.} For each $c \in \mathbb{N}_0$, a correct process $p_k \in P$ delivers $(c, m)$ at most once.\\
    \textbf{RB-Validity.} If a correct sender calls $\mfn{broadcast}{m}$, then every correct process $p_i \in P$ eventually delivers $(c, m)$.
\end{definition}

Additionally, we assume an asynchronous common coin, e.g. as defined by Cachin et al.~\cite{DBLP:journals/joc/CachinKS05}, that produces a uniformly distributed common random number $p$ out of $\{p \mid p \in \mathbb{N}_0 \colon p < n \}$ for all correct processes and a name $i \in \mathbb{N}_0$ as soon as $f+1$ processes invoked $\mfn{toss}{i}$;  repetitive calls with same the $i$ yield the same $p$.
We further assume a trusted setup of the common coin and the USIGs using a public key infrastructure
(to set up the common coin's threshold signature scheme, dealerless variants \cite{DBLP:conf/pkc/Boldyreva03} and those with an asynchronous setup \cite{DBLP:conf/wdag/AbrahamAM10} exist).

\subsection{The Algorithm} 

\begin{figure}
    \centering
    \includegraphics[width=\textwidth]{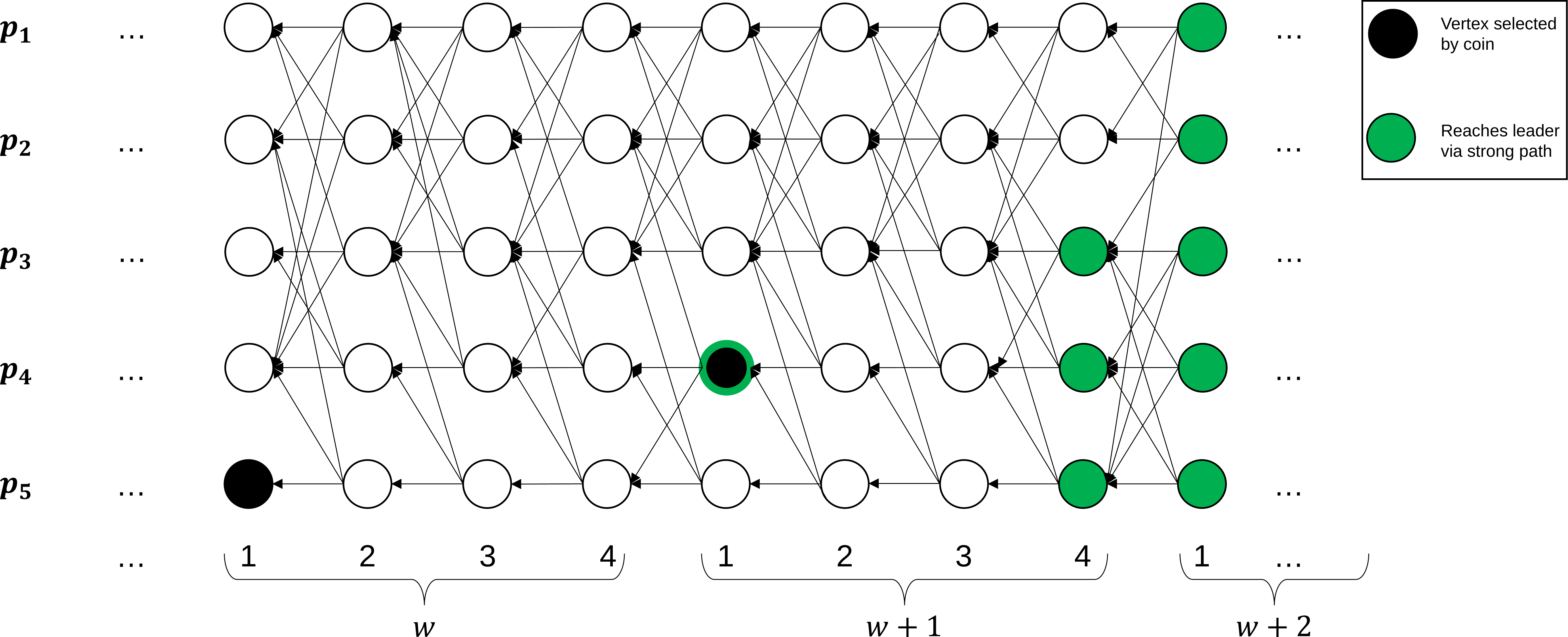}
    \caption{Example DAG for $n=5$ processes of which at maximum $2$ may be faulty. Shown is the `global' state of the graph, i.e., after every process eventually received every vertex. For simplicity, all vertices are valid and weak edges are left out. The direct commit rule is not fulfilled for any process for wave $w$; it is fulfilled for processes $p_3$, $p_4$, and $p_5$ for wave $w+1$. The green coloring highlights the effect of the direct commit rule as proven in \Cref{lemma:strongPathLeaders}. The direct commit rule ensures that a correct process can commit a wave retrospectively if it was not able to commit when it finished the wave. Since the leader vertex of wave $w+1$ has a strong path to the leader vertex of wave $w$, wave $w$ will be committed retrospectively.}
    \label{fig:graph}
\end{figure}

DAG-Rider uses $n$ Reliable Broadcast instances to disseminate process messages and to construct a  directed acyclic graph (DAG) that captures the communication history of all processes.
In a second step, each process derives consensus on the order of transactions using a Common Coin based on the graph structure.
The adapted DAG-Rider algorithm executed by a correct process $p_i \in P$ is shown in \Cref{alg:rider}.
Utility functions are listed in \Cref{alg:util}.
The core of the approach is the construction and interpretation of a (local) DAG that captures received transactions and the observed communication sequence between processes.
The DAG is structured in rounds and a round contains at maximum one vertex per process, i.e., $n$ vertices.
Rounds are addressed in an array style and the local view of a process $p_i$ on the DAG is indicated by an index $i$.
The very first round $\mathit{DAG}_i[0]$ is initialized with $n$ hard-coded ``genesis'' vertices.
A vertex in round $r$ has two types of edges: strong edges point to vertices of round $r-1$ and weak edges point to vertices of any round $r' \leq r-2$.
As soon as $p_i$ received $\quorum$ \emph{valid} vertices for a round $r$, i.e., $\quorum$ vertices referencing $\quorum$ vertices of round $r-1$ as strong edges ($v.\mathit{strong}$, l.~11), for which it also knows its predecessors (l.~15), $p_i$ will \emph{complete} round $r$ and transition to round $r+1$.
Now, as soon as $p_i$ receives a client transaction, it will become the payload of a vertex $v$ which is created and broadcast by $p_i$ for round $r+1$ (ll.~44 and 21-27). 
The vertex $v$ connects to all vertices $p_i$ received for round $r$ (l.~23).
If $p_i$ received vertices $u$ for older rounds that are not reachable from the newly created vertex using the transitive closure of strong and weak edges (a `path'), $u$ will become a weak edge of $v$ ($v.\mathit{weak}$, l.~26).
The new vertex is broadcast using Reliable Broadcast instance $i$ to all processes (l.~27).
Every fourth round a so-called wave, consisting of four rounds, is completed (l.~19) and the DAG structure is used to derive a total order on the transactions (ll. 28-42).
Each wave $w$ has exactly one wave leader $v$ which is chosen calling $\mathit{coin}.\mfn{toss}{w}$ from the vertices of $w$'s first $\mfn{round}{w,1}$.
The random number is used to select the process whose vertex is to be used as wave leader.
If $v$ was not (yet) received or there are no $\quorum$ vertices in the $w$'s fourth $\mfn{round}{w,4}$ that have $v$ in their transitive closure of strong edges (a `strong path'), i.e. the \emph{direct commit rule} is not fulfilled, the wave cannot be committed (l.~30).
If wave $w$ can be committed, process $p_i$ checks first if there are wave leaders of waves $w'$ between the last wave that was committed (variable $\mathit{decidedWave}$) and the current wave $w$ that were received in the meantime and are connected to the leader of the wave $w'+1$ (retrospective commit, ll.~33-36). 
The wave leaders are used as the root for a deterministic graph traversal to determine the total order of transactions (ll.~38-42).
An example for a resulting graph with $n=5$ processes, i.e. $f \leq 2$, is shown in \Cref{fig:graph}.

\begin{algorithm}
\caption{TEE-Rider pseudocode for process $p_i \in P, n := |P|, \boldsymbol{n \geq 2f+1}$}
\label{alg:rider}
\begin{algorithmic}[1]
\ReplicaState{$\mathit{DAG} \colon$ array of sets of vertices, $\mathit{DAG}[0]$ initialized with ``genesis'' vertices}
\ReplicaState{$r \colon \mathbb{N}$, initialized with 1}
\ReplicaState{$\mathit{decidedWave} \colon \mathbb{N}_0$, initialized with 0} 
\ReplicaState{$\mathit{transactionsToPropose} \colon$ queue of client transactions $t$, initialized empty}
\ReplicaState{$\mathit{buffer} \colon$ set of vertices, initialized empty}
\ReplicaState{$\mathit{rb} \colon$ array of $n$ Reliable Broadcast instances with delivered() and broadcast($r, m$)}
\ReplicaState{$\mathit{coin} \colon$ common coin instance with toss($w$)}

\While{True}
    \For{$k \gets 0$ up to $n-1$}
        \For{$m = (r', v) \in \mathit{rb}[k].\mfn{delivered}{}$}
            \If{$|v.strong| < \quorum$} \textbf{continue} 
            \EndIf
            \State $v.\mathit{source} \gets p_k; v.\mathit{round} \gets r'; v.\mathit{delivered} \gets \mathrm{False}$
            \State $\mathit{buffer}.\mfn{add}{v}$
        \EndFor
    \EndFor
    \For{$v \in \mathit{buffer}$}
        \If{$v.\mathit{round} > r \lor \exists u \in v.\mathit{strong} \cup v.\mathit{weak} \colon u \not \in \cup_{r'\geq 0}\mathit{DAG}[r']$}
            \textbf{continue}
        \EndIf
        \State $\mathit{DAG}[v.\mathit{round}].\mfn{add}{v}$
        \State $\mathit{buffer}.\mfn{remove}{v}$
    \EndFor
    \If{$|\mathit{DAG}[r]| < \quorum$} \textbf{continue} 
    \EndIf
    \If{$r \mod 4 = 0$}
        $\mfn{waveReady}{\frac{r}{4}}$
    \EndIf
    \State $r \gets r +1$
    \State \textbf{wait until} $\neg \mathit{transactionsToPropose}.\mfn{isEmpty}$
    \State $v \gets$ new vertex
    \State $v.\mathit{block} \gets \mathit{transactionsToPropose}.\mfn{dequeue}{}; v.\mathit{strong} \gets \mathit{DAG}[r - 1]$
    \For{$r' \gets r - 2$ down to $1$}
        \For{$u \in \mathit{DAG}[r']$}
            \If{$\neg \mfn{path}{v, u}$}
                $v.\mathit{weak}.\mfn{add}{u}$
            \EndIf
        \EndFor
    \EndFor
    \State $\mathit{rb}[i].\mfn{broadcast}{r, v}$
\EndWhile

\Proc{waveReady}{$w$}
    \State $v \gets \mfn{getWaveLeader}{w}$ \Comment{$\bot$ if $\mfn{round}{w,1}$ vertex of chosen process is not in $\mathit{DAG}$}
    \If{$v = \bot \lor |\{u \mid u \in \mathit{DAG}[\mfn{round}{w, 4}] \colon \mfn{strongPath}{u, v}\}| < \quorum$} \Return  
    \EndIf
    \State $\mathit{leadersStack} \gets$ new stack; $\mathit{leadersStack}.\mfn{push}{v}$
    \For{$w' \gets w - 1$ down to $\mathit{decidedWave} + 1$}
        \State $u \gets \mfn{getWaveLeader}{w'}$
        \If{$u \not = \bot \land \mfn{strongPath}{v, u}$}
            \State $\mathit{leadersStack}.\mfn{push}{u}; v \gets u$
        \EndIf
    \EndFor
    \State $\mathit{decidedWave} \gets w$
    \While{$\neg \mathit{leadersStack}.\mfn{isEmpty}{}$}
        \State $v \gets \mathit{leadersStack}.\mfn{pop}{}$
        \State $\mathit{verticesToDeliver} \gets \{u \mid u \in \cup_{r'>0} \mathit{DAG}[r'] \colon \mfn{path}{v, u} \land \neg u.\mathit{delivered} \}$
        \For{$u \in \mathit{verticesToDeliver}$ in deterministic order}
            \State $u.\mathit{delivered} \gets \mathrm{True}$; \textbf{deliver} $(u.\mathit{block}, u.\mathit{round}, u.\mathit{source})$
        \EndFor
    \EndWhile
\EndProc

\Upon{clientRequest}{$t$}
    \State $\mathit{transactionsToPropose}.\mfn{enqueue}{t}$
\EndUpon
\end{algorithmic}
\end{algorithm}

\begin{algorithm}
\caption{Utility functions pseudocode}
\label{alg:util}
\begin{algorithmic}[1]
\Fn{path}{$v,u$}{boolean}
    \State \Return exists a sequence of vertices $(v_1, v_2, ..., v_k) \in \cup_{r'\geq 0} DAG[r']$ such that
    \State $v_1 = v \land v_k = u \land \forall i \in [2,k] \colon v_i \in v_{i-1}.strong \cup v_{i-1}.weak$
\EndFn
\Fn{strongPath}{$v,u$}{boolean}
    \State \Return exists a sequence of vertices $(v_1, v_2, ..., v_k) \in \cup_{r'\geq 0} DAG[r']$ such that
    \State $v_1 = v \land v_k = u \land \forall i \in [2,k] \colon v_i \in v_{i-1}.strong$
\EndFn
\Fn{getWaveLeader}{$w$}{vertex or $\bot$}
    \State $p_j \gets coin.\mfn{toss}{w}$
    \If{$\exists v \in DAG[\mfn{round}{w, 1}] \colon v.source = p_j$}
        \Return $v$
    \EndIf
    \State \Return $\bot$
\EndFn
\Fn{round}{$w,i$}{$\mathbb{N}$}
    \State \Return $4(w-1)+i$
\EndFn
\end{algorithmic}
\end{algorithm}

\section{Correctness Argument}
\label{sec:proof}
Lemmas 1 and 2 of the original DAG-Rider publication \cite{DBLP:conf/podc/KeidarKNS21} are crucial for Total Order and Agreement and rely on quorum intersection arguments.
The following Lemmas \ref{lemma:strongPathLeaders} and \ref{lemma:get-core} show the corresponding results for a quorum size of $\quorum$.
Results for Integrity and Validity simply follow from the original paper.

\begin{lemma}
\label{lemma:strongPathLeaders}
If a correct process $p_i \in P$ commits the wave leader $v$ of a wave $w$ when it completes wave $w$ in $\mfn{round}{w,4}$, then any valid vertex $v'$ of any process $p_j \in P$ broadcast for a round $r \geq \mfn{round}{w+1,1}$ will have a strong path to $v$.
\end{lemma}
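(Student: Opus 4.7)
The plan is to prove the claim by induction on the round number $r \geq \mfn{round}{w+1,1}$ of the vertex $v'$. The base case $r = \mfn{round}{w+1,1}$ is where the quorum intersection needs work; the inductive step is a straightforward unfolding of strong edges.

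For the base case, the fact that $p_i$ committed $v$ in $\mfn{round}{w,4}$ means, by the negation of the condition on line~30, that there is a set $S \subseteq \mathit{DAG}_i[\mfn{round}{w,4}]$ with $|S| \geq \quorum$ such that every vertex in $S$ has a strong path to $v$. A valid vertex $v'$ broadcast for $\mfn{round}{w+1,1}$ has $|v'.\mathit{strong}| \geq \quorum$ strong edges into $\mfn{round}{w,4}$. Each of $S$ and $v'.\mathit{strong}$ contains at most one vertex per process: this follows from RB-Integrity combined with the USIG's unique-counter guarantee, which prevents a process (even a Byzantine one) from obtaining valid signatures on two different round-$\mfn{round}{w,4}$ vertices. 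Hence both sets identify at least $\quorum$ distinct processes, and a pigeonhole count gives $2\quorum - n \geq 1$ processes common to both. For such a process $p_k$, the vertex it contributes to $S$ and the vertex it contributes to $v'.\mathit{strong}$ are necessarily the same vertex $u_k$, again by USIG non-equivocation; thus $v'$ has a strong edge to $u_k$, which by assumption has a strong path to $v$, giving the desired strong path from $v'$ to $v$.

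For the inductive step $r > \mfn{round}{w+1,1}$, suppose every valid vertex broadcast for round $r-1$ has a strong path to $v$. A valid $v'$ for round $r$ has $|v'.\mathit{strong}| \geq \quorum \geq 1$ strong edges to vertices in round $r-1$, each of which must itself be a valid broadcast vertex to appear in any correct process's DAG (line~15 together with line~11). Applying the hypothesis to any one such predecessor and prepending $v'$ yields a strong path from $v'$ to $v$.

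The main obstacle is exactly the base case: in the original DAG-Rider at $n \geq 3f+1$ with quorum $2f+1$, the analogous intersection yields $f+1$ common processes and hence at least one correct one, which trivially witnesses a unique vertex. With the reduced quorum $\quorum$ and $n \geq 2f+1$, the intersection can shrink to a single process that may be Byzantine. The argument is rescued by the non-equivocation property of the USIG-based Reliable Broadcast: even a Byzantine source can contribute at most one valid vertex per round, so the single intersecting process still determines a unique vertex that serves as the bridge. This is the only place where the TEE assumption is essential for the lemma.
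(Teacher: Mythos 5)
Your proof follows essentially the same route as the paper's: invoke the direct commit rule to obtain the quorum $U$ of $\mfn{round}{w,4}$ vertices with strong paths to $v$, intersect it with the $\quorum$ strong edges of any valid $\mfn{round}{w+1,1}$ vertex via pigeonhole over $n$ processes, and then induct on the round number. Your version is in fact slightly more careful than the paper's one-line "any two subsets of size $\quorum$ of a superset of size $n$ intersect," since you make explicit that the intersection argument needs at most one valid vertex per process per round (i.e., USIG-backed non-equivocation) and that the single intersecting process may be Byzantine — a point the paper leaves implicit.
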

\begin{proof}
Since $p_i$ commits $v$ in $\mfn{round}{w,4}$, the direct commit rule is fulfilled (l.~30): $\exists U \subseteq \mathit{DAG}_i[\mfn{round}{w,4}]\colon |U| \geq \quorum \land \forall u \in U \colon \mfn{strongPath}{u,v}$.
A valid vertex must reference at least $\quorum$ distinct vertices of the previous round with a strong edge (l.~11).
Thus, a process $p_j \in P$ broadcasting a valid vertex $v_j$ for $\mfn{round}{w+1,1}$ selected at least $\quorum$ vertices of $\mfn{round}{w,4}$ as strong edges for $v_j$.
Any two subsets of size $\quorum$ of a superset of size $n$ intersect at least in one element.
Thus, every valid vertex of a process broadcast for $\mfn{round}{w+1,1}$ must have at least one edge to a vertex of $U$, and, via $U$ to $v$. 
As every valid vertex of $\mfn{round}{w+1,1}$ has a strong path to $v$, and every valid vertex of $\mfn{round}{w+1,2}$ connects to at least $\quorum$ vertices of  $\mfn{round}{w+1,1}$, by induction, any valid vertex $v'$ of any process $p_j \in P$ broadcast for a round $r \geq \mfn{round}{w+1, 1}$ has a strong path to $v$.
\qedhere
\end{proof}

\begin{lemma}
\label{lemma:get-core}
When a correct process $p_i \in P$ completes $\mfn{round}{w, 4}$ of wave $w$, then $\exists V_1 \subseteq \mathit{DAG}_i[\mfn{round}{w, 1}], V_4 \subseteq \mathit{DAG}_i[\mfn{round}{w, 4}] \colon |V_1| \geq \quorum \land |V_4| \geq \quorum \land (\forall v_1 \in V_1, \forall v_4 \in V_4 \colon \mfn{strongPath}{v_4,v_1})$.
\end{lemma}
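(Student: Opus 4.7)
The plan is to take $V_4 := \mathit{DAG}_i[\mfn{round}{w,4}]$ directly; it has $|V_4| \geq \quorum$ because the loop guard at l.~19 is precisely what let $p_i$ finish round $(w,4)$, and the same guard applied earlier in the execution gives $|\mathit{DAG}_i[\mfn{round}{w,j}]| \geq \quorum$ for $j \in \{1,2,3\}$ as well. To build $V_1$, I would follow strong paths three rounds backwards from $V_4$: for each $v_4 \in V_4$ the set $v_4.\mathit{strong}$ is a $\geq \quorum$-sized subset of $\mathit{DAG}_i[\mfn{round}{w,3}]$ (ll.~11, 15), and the same bound propagates into rounds $(w,2)$ and $(w,1)$, since every vertex thus reached also sits in the DAG and carries a $\geq \quorum$-sized strong neighborhood into the previous round. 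A routine induction on the hop count then shows that the set $R(v_4) \subseteq \mathit{DAG}_i[\mfn{round}{w,1}]$ of round-$(w,1)$ vertices reachable from $v_4$ by a strong path has $|R(v_4)| \geq \quorum$. Setting $V_1 := \bigcap_{v_4 \in V_4} R(v_4)$ makes the strong-path requirement hold by construction.

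The heart of the proof, and its main obstacle, is to lower-bound $|V_1|$ by $\quorum$. My plan is to argue in the same style as the quorum-intersection argument of \Cref{lemma:strongPathLeaders}: every round of wave $w$ contains at most $n$ vertices (one per process), and each valid vertex's strong neighborhood is a $\geq \quorum$ subset of the previous round's DAG, so any two such neighborhoods overlap in at least $2\quorum - n \geq 1$ element. Iterating this observation through the three hops $\mfn{round}{w,4} \to \mfn{round}{w,3} \to \mfn{round}{w,2} \to \mfn{round}{w,1}$ shows that the sets $R(v_4)$ cannot be too disparate, so their common intersection $V_1$ retains at least $\quorum$ elements. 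The delicate point is that the slimmer quorum $\quorum = \lfloor n/2 \rfloor + 1$ gives only a pairwise overlap of $2\quorum - n$ (rather than $f+1$ as in the original $n \geq 3f+1$ DAG-Rider setting), so the three-hop cascade must be handled carefully; it is the structural constraint that every vertex's entire strong neighborhood lives inside the previous round's DAG (l.~15), together with $|\mathit{DAG}_i[\mfn{round}{w,j}]| \leq n$ for every $j$, that prevents the union of "missed" round-$(w,1)$ vertices across different $v_4$ from exhausting $\mathit{DAG}_i[\mfn{round}{w,1}]$ and forces $|V_1| \geq \quorum$.
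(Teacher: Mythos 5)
Your proposal has a genuine gap at exactly the point you yourself flag as ``the heart of the proof'': the lower bound $|V_1| \geq \quorum$ for $V_1 := \bigcap_{v_4 \in V_4} R(v_4)$ is never actually established, and the technique you sketch for it cannot work. Pairwise intersection of sets of size $\quorum$ inside a ground set of size $n$ guarantees an overlap of only $2(\quorum) - n \geq 1$ elements, and ``iterating this observation'' yields no lower bound at all on the \emph{common} intersection of many such sets: for $n=5$ and quorum size $3$, the sets $\{1,2,3\}$, $\{3,4,5\}$, $\{1,2,4\}$ pairwise intersect yet have empty triple intersection. Since $V_4$ may contain up to $n$ vertices, you are intersecting up to $n$ sets $R(v_4)$, and nothing in your argument prevents that intersection from collapsing to one element or, a priori, to nothing. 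The earlier parts of the proposal (choosing $V_4 = \mathit{DAG}_i[\mfn{round}{w,4}]$, and showing $|R(v_4)| \geq \quorum$ for each individual $v_4$) are fine, but they only reduce the lemma to this unproved combinatorial claim.

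The missing idea is the one the paper imports from the get-core argument of Attiya and Welch: rather than intersecting full reachable sets, one exhibits a single \emph{pivot} vertex in a middle round that every vertex of $\mfn{round}{w,4}$ reaches. Concretely, a counting/pigeonhole argument on the incidence matrix between $\mfn{round}{w,3}$ and $\mfn{round}{w,2}$ (each of the $n$ rows contributes at least $\quorum$ ones, so some column carries at least $\quorum$ ones) produces one vertex $v_l$ of $\mfn{round}{w,2}$ with a set $V_3$ of at least $\quorum$ predecessors in $\mfn{round}{w,3}$; a single quorum intersection then shows that every valid vertex of $\mfn{round}{w,4}$ has a strong edge into $V_3$ and hence a strong path to $v_l$; and $V_1$ is simply $v_l.\mathit{strong}$, which has size at least $\quorum$ by validity of $v_l$. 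This route uses one pigeonhole step plus one pairwise intersection instead of a cascade of intersections, which is precisely why it survives the slim quorum $\quorum$. The set $\bigcap_{v_4} R(v_4)$ you aim at does turn out to contain such a $V_1$, but proving that essentially forces you to rediscover the pivot argument; as written, your proof is missing its key step.
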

\begin{proof}
By use of Reliable Broadcast and validity checks in ll.~11 and 15, faulty processes are limited to omission faults.
Thus, the \emph{get-core} argument of Attiya and Welch \cite[Sec. 14.3.1]{attiya2004distributed} still holds \cite[Sec. 14.3.3]{attiya2004distributed}:
Let $A \in \{0,1\}^{n \times n}$ be a matrix that contains a row for each possible vertex of $\mfn{round}{w,3}$ and a column for each possible vertex of $\mfn{round}{w,2}$. 
Let $A[j,k] = 1$ if the vertex of process $p_j$ of $\mfn{round}{w,3}$ has a strong edge to the vertex of process $p_k$ of $\mfn{round}{w,2}$ or $p_j$ sends no vertex (or an invalid one) but $p_k$ sends a valid vertex for $\mfn{round}{w,2}$.
As there are at least $\quorum \leq n-f$ correct processes, each row of $A$ contains at least $\quorum$ ones and $A$ contains at least $n(\quorum)$ ones. 
Since there are $n$ columns, there must be a column $l$ with at least $\quorum$ ones. 
This implies there is a vertex $v_l$ by process $p_l$ in $\mfn{round}{w,2}$ s.t. $\exists V_3 \subseteq \mathit{DAG}_i[\mfn{round}{w, 3}] \colon |V_3| \geq \quorum \land \forall v_3 \in V_3 \colon \mfn{strongPath}{v_3, v_l}$.
As at most $f$ vertices in $V_3$ belong to faulty processes that may commit send omission faults for $\mfn{round}{w,3}$ and $\quorum \geq f+1$, by quorum section at least one vertex of $V_3$ is received by any correct process $p_j \in P$ before it sends its vertex for $\mfn{round}{w,4}$.
Thus, every valid vertex in $\mathit{DAG}_i[\mfn{round}{w, 4}]$ has at least one strong edge to a vertex of $V_3$.
Since $v_l$ must be valid and thus has a strong edge to each vertex of a set $V_1 \subseteq \mathit{DAG}_i[\mfn{round}{w,1}], |V_1| \geq \quorum$, any valid vertex of rounds $r \geq \mfn{round}{w,4}$ has a strong path to every vertex, including $V_1$, reached by $v_l$ via strong paths. Please note that the construction of the set $V_1$ is valid for all correct processes that complete the wave and, thus, represents the `common core'.
\qedhere
\end{proof}

\section{Discussion and Conclusion}
\label{sec:conclusion}
DAG-Rider shows the power of causal order broadcast to implement consensus.
The adaption for TEEs preserves the simplicity of DAG-Rider while increasing its fault tolerance and reducing the communication effort (i.e., from `double echo` to `single echo` Reliable Broadcast).
The ease of adaption of DAG-Rider for TEEs make it a perfect textbook example for TEE-based ABAB.
Follow-up work to DAG-Rider, Tusk \cite{DBLP:conf/eurosys/DanezisKSS22}, addresses a major deployability issue, namely garbage collection, shortens the wave length, and replaces the underlying Reliable Broadcast with a communication scheme that leverages the graph structure to achieve linear communication complexity in the happy case.
Additionally, to the best of our knowledge, there exists no TEE-based, dealerless, and asynchronous common coin primitive.
In summary, investigating a TEE-based dealerless setup as well as transforming the follow-ups of DAG-Rider for empirical studies to investigate the assumed superiority of asynchronous TEE-based approaches, e.g., in comparison to MinBFT \cite{DBLP:journals/tc/VeroneseCBLV13}, is a promising line of research. 

\bibliography{references.bib}

\end{document}